\newtheorem{theorem}{Theorem}
\theoremstyle{plain}
\newtheorem{proposition}[theorem]{Proposition}
\numberwithin{equation}{section}
\numberwithin{theorem}{section}
\newcommand{\E}{\ensuremath{\mathbb{E}}}
\def\e{{\mathrm{e}}}
\begin{document}
\title[Option Price Bounds]{Pricing options on forwards in energy markets:\\ the role of mean reversion's speed}
\date{\today }

\author[Schmeck]{Maren Diane Schmeck}
\address[Maren Diane Schmeck]{\newline
Department of Mathematics\newline
University of Cologne\newline
Weyertal 86--90\newline
50931 Cologne, Germany}
\email[]{mschmeck\@@math.uni-koeln.de}
\maketitle
\begin{abstract}

Consider the problem of pricing options on forwards in energy markets, when spot prices follow a geometric multi-factor model in which several rates of mean reversion appear. 
In this paper we investigate the role played by slow mean reversion  when pricing and hedging options. In particular, we determine both upper and lower bounds for the error one makes neglecting low rates of mean reversion in the spot price dynamics. 

{\bf Keywords:}  electricity spot prices,  multi-scale mean reversion, delivery period, options on forwards,   hedging, pricing error, upper and lower bounds.
\end{abstract}

\section{Introduction}

A special feature of electricity markets is the  mean reverting behaviour of the spot prices. Here, we want to investigate the role of this mean reversion on the pricing and hedging problem of European options on forwards. As the underlying forwards typically deliver the commodity over a period, one may expect that the effect  averages out in the option price, depending on the speed of the mean reversion and on the length of the delivery period. In this paper, we want to quantify this effect, and determine upper and lower bounds for the pricing error if one or some mean reverting components of the electricity spot price are not taken into account.

Electricity spot prices are known to exhibit various characteristics that are rarely observed in other markets. One of these features is  the multi-scale mean reverting behaviour. Well known are the so-called spikes, large price shocks, which mean revert very fast towards the original price level  and have a half life of about two days. 
However, slower mean reverting components are observable in the electricity markets, too. Meyer-Brandis and Tankov \cite{MBT} 
find that at most European exchanges the autocorrelating structure of the spot prices  can be described very well with a weighted sum of exponentials.   This autocorrelation structure arises precisely for spot price models that include a sum of independent mean reverting processes, where the mean reversion takes place with different rates. As a consequence, they suggest a spot price model based on a sum of independent Ornstein-Uhlenbeck processes, and find for the EEX that two or three mean reverting factors are suitable. Now, the classical two-factor model after Gibson and Schwartz \cite{GS} and Schwartz and Smith \cite{SS} consists of only one mean reverting component, giving the short term variations of the spot price. Additionally, a non-stationary long term component plays the role of a  stochastic level of mean reversion and reflects long term expectations, as for example of political developments. Combining the two approaches, we assume a multi-factor spot price model consisting of one non-stationary process  and a sum of independent Ornstein-Uhlenbeck processes. 

When fitting multi-factor models to data, the question arises of how to filter from the observed time series the different, itself not observable  components. This is essential step for  pricing derivatives.  In the literature, there are several approaches that consider models with two factors. Many of them focus on filtering the spike component, see  Meyer-Brandis and Tankov \cite{MBT},  Benth and Schmeck \cite{BS2}, or filtering jumps (Borovkova and Permana \cite{BP},  A\"it-Sahalia \cite{A}).  Also, Barlow et. al \cite{BGM} apply the technique of  Kalman filtering to electricity markets. Nevertheless, if three or more factors are involved, the problem becomes even more challenging and it is therefore natural to investigate the impact of the factors on option pricing and hedging.
  
For this purpose, we consider options of European type, where the underlying is a forward. 
The standard way to price forwards is to define its price at time $t$ with delivery at $T>0$ as conditional expectation of the spot price 
\begin{align}\label{eq:forwardint}
f(t,T)= \mathbb{E} _\mathbb{Q}[ S(T)|\mathcal{F}_t]\;,\quad 0 \leq t \leq T,
\end{align} 
for some pricing measure $\mathbb{Q}$. Though, electricity forwards typically deliver the underlying over a period, as a month, three months or a year. One alternative approach is therefore to incorporate the delivery period by defining the forward as expected average spot 
 \begin{align}\label{eq:avfor}F(t,T_1,T_2)=\mathbb{E}_\mathbb{Q}\left[\frac{1}{T_2-T_1}\int_{T_1}^{T_2}S(t) dt\Big|\mathcal{F}_t\right]\;.\end{align} If we take, as we do in this paper, an exponential  model for the spot price, we do not obtain closed form expressions for the forward price \eqref{eq:avfor}, which is a significant drawback when analysing option prices. However, we aim to  use  an exponential model since it is convenient in option pricing. Indeed, a suitable exponential framework leads to Black's formula \cite{B}, used in practice e.g. by EEX \cite{EEX}.  In order to meet the previous aim while still having a good analytical tractability, we thus follow the approach to model the delivery period by its midpoint and define the forward price by \eqref{eq:forwardint}. If we take additionally into account that for energy options the delivery period starts at (or shortly after) the time $\tau$ of the exercise of the option, the length of the delivery period is given by $2(T-\tau)$. Now, we would like to consider delivery periods of different length. Fixing $t$ and $\tau$, the delivery period increases considering the limit $T\to \infty$. 

In this setting, Benth and Schmeck \cite{BS} show that  a fast mean reverting component does not significantly influence the price of an option. There, focusing on the spikes, they assume that the innovations of the fast mean reverting component are given by a pure jump L\'evy process. By introducing jumps to the model, the option price is no longer given by Black's formula. Nevertheless, the authors provide in \cite{BS} an upper bound for the error one makes when pricing an option and neglecting the spike component. This bound is exponential and depends on the speed of mean reversion and the length of the delivery period. Hence, for a big mean reversion parameter and a reasonable long delivery period, the error is very small such that the fast mean reverting jumps in the spot price dynamics are not relevant for option pricing.  As a consequence, Black's formula gives a good approximation to the option price.

Here, we are concerned with the role of slow mean reversion in the underlying spot price dynamics to option pricing. Although it is hard to justify that the innovations of the spikes are Gaussian, one can always describe the drivers of the slower mean reverting components by Brownian motions. Since a slowly mean reverting component does not average out so fast in the forward price, one may expect to make an error in the option price, when the delivery period is comparably short. We are therefore interested in quantifying the minimum pricing error one has neglecting slowly mean reverting components. To this end we determine a lower bound for the error. We are even able to find an explicit analytical function for the asymptotic behaviour of the pricing error by providing both lower and upper bounds for it. We show that the pricing error takes the form of a sum of exponentially decaying terms, where the speed is determined through the mean reversion rate, as well as  the parameters of the long term component. Furthermore, we are interested in how the error  in the option price transfers to the delta hedging component.\\

This paper is organized as follows. In Chapter \ref{ch:prep} we introduce the model for the spot price dynamics, we derive the forward price dynamics and the option price. We then determine the pricing error of the option when neglecting some mean reverting components in Chapter \ref{ch:pricingerror}. Finally, Delta hedging is considered in Chapter \ref{ch:hedging}.

\section{The spot price dynamics  and implied forward and option prices}\label{ch:prep}

Consider a filtered probability space $(\Omega, \mathcal{F},\{ \mathcal{F}\}_{t\geq0}, \mathbb{P})$ and an equivalent probability measure $\mathbb{Q}$.   Let the electricity spot price under  $\mathbb{Q}$ be given by a multi-factor model  in the spirit of the two factor model after Gibson and Schwartz \cite{GS}, Schwartz and Smith \cite{SS} and Lucia and Schwartz \cite{LS}, and a sum of mean reverting factors as in Benth et al. \cite{BKMB}:
\begin{align}\label{eq:spot}
S(t)=\Lambda(t) \exp\{X(t) +\sum_{i\in I} Y_i(t)\}\;,
\end{align}  
where $\Lambda(t)$ is a deterministic and bounded seasonality function. The non-stationary component is assumed to be a drifted Brownian motion
\begin{align}\label{eq:X}
dX(t)= \mu dt + \sigma dB(t)\;,
\end{align}
where $B$ is a standard Brownian motion and  $\mu,\,\sigma\geq0$ are constant. The stationary factors are supposed to be Ornstein-Uhlenbeck processes
\begin{align}\label{eq:Y}
dY_i(t)&= -\beta_i Y_i(t) dt + \sigma_i dB_i(t) 
\end{align}
for independent standard Brownian motions $B_i$, also independent  of $B$, and constants $\beta_i>0$ and $ \sigma_i >0$ for $i\in I=\{1,..,n\}$. In \eqref{eq:Y}, we choose the drivers of the stationary components to be Gaussian  in contrast to Benth and Schmeck \cite{BS}, who use pure jump L\'evy processes. There, the main target are the spikes. Those are big sudden upwards movements that mean revert quickly and it is therefore essential  to choose a driver that allows for big price movements, that is jumps.  In this paper, the emphasis is on slowly mean reverting components and the drivers we choose here fulfil this task. \\
The stationary factor $X$ can be interpreted as a long term factor, giving the general level of the prices and reflecting long term expectations towards political decisions, improving technologies or the storage level in case of storable commodities. The stationary factors $Y_i$, $i\in I$  reflect the every day price variations due to supply and demand, and thus play the role of short term components.  \\
To price forwards, we  relate the spot price to the forward price and use the standard definition 
\begin{align}\label{eq:forward}
f(t,T)= \mathbf{E} _\mathbb{Q}[ S(T)|\mathcal{F}_t]\;.
\end{align}  For convenience, we state the spot price \eqref{eq:spot} directly under $\mathbb{Q}$ and  omit in the sequel the $\mathbb{Q}$ in the associated expectations $\mathbb{E}_\mathbb{Q}$. Note that due to the non-storability of electricity, $\mathbb{Q}$ does not have to be an equivalent martingale measure, but an equivalent measure only. Therefore, there is more than one choice for  $\mathbb{Q}$. For a detailed discussion about the pricing measure and risk premium in electricity markets, see Benth et al. \cite{BSBK-book}, Chapter 1.5.3. The following proposition gives the forward price and its dynamics. 
\begin{proposition}\label{prop:forward}The forward price $f(t,T)$ at $t\geq0$ with delivery at $T\geq t$ is given by
\begin{align}\label{eq:fexpl}
f(t,T)= h(t,T) \exp\left\{ X(t)+\sum_{i\in I} e^{-\beta_i (T-t)} Y_i(t) \right\}\;
\end{align}
where 
\begin{align*}
h(t,T)=\Lambda(T) \exp\left\{ \left(\mu  + \frac12 \sigma^2\right) (T-t) + \frac12 \sum_{i\in I} \frac{\sigma_i^2}{2\beta_i}\left(1- e^{-2\beta_i(T-t)}\right)\right\}\;.
\end{align*}
The dynamics of the process $t\mapsto f(t, T)$ for $t\leq T$ is given by
\begin{align}\label{eq:fdyn}
df(t,T)= f(t,T)\left\{\sigma dB(t)+ \sum_{i\in I} e^{-\beta_i(T-t)} \sigma_i dB_i(t) \right\}\;.
\end{align}
\end{proposition}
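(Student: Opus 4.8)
The plan is to compute the conditional expectation in \eqref{eq:forward} directly, exploiting the fact that all driving noises are Gaussian. First I would solve the defining SDEs \eqref{eq:X} and \eqref{eq:Y} forward from $t$ to $T$. The non-stationary factor integrates to $X(T)=X(t)+\mu(T-t)+\sigma\bigl(B(T)-B(t)\bigr)$, while each Ornstein--Uhlenbeck factor admits the standard representation $Y_i(T)=e^{-\beta_i(T-t)}Y_i(t)+\sigma_i\int_t^T e^{-\beta_i(T-s)}\,dB_i(s)$. Substituting these into $S(T)=\Lambda(T)\exp\{X(T)+\sum_{i\in I}Y_i(T)\}$ separates, inside the exponential, the $\mathcal F_t$-measurable part $X(t)+\sum_{i\in I}e^{-\beta_i(T-t)}Y_i(t)$ from a sum of increments that are independent of $\mathcal F_t$.

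Taking $\mathbb E[\,\cdot\mid\mathcal F_t]$, the measurable part factors out of the expectation and reproduces the exponential in \eqref{eq:fexpl}. The remaining factor is the expectation of an exponential of a centred Gaussian, which I would evaluate with the Laplace-transform identity $\mathbb E[e^{Z}]=\exp\{\tfrac12\operatorname{Var}(Z)\}$ for $Z$ centred Gaussian; independence of $B$ and the $B_i$ lets the joint expectation split into a product. Here the one point needing care is the variance of the Itô integral: by the Itô isometry, $\operatorname{Var}\bigl(\sigma_i\int_t^T e^{-\beta_i(T-s)}\,dB_i(s)\bigr)=\sigma_i^2\int_t^T e^{-2\beta_i(T-s)}\,ds=\frac{\sigma_i^2}{2\beta_i}\bigl(1-e^{-2\beta_i(T-t)}\bigr)$, which together with $\operatorname{Var}(\sigma(B(T)-B(t)))=\sigma^2(T-t)$ and the deterministic drift $\mu(T-t)$ collects, along with $\Lambda(T)$, into exactly $h(t,T)$, establishing \eqref{eq:fexpl}.

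For the dynamics \eqref{eq:fdyn} I would apply Itô's formula to \eqref{eq:fexpl} viewed as a function of $t$, $X(t)$ and the $Y_i(t)$. Writing $\log f(t,T)=\log h(t,T)+X(t)+\sum_{i\in I}e^{-\beta_i(T-t)}Y_i(t)$, the key computation is the Itô product rule for the time-dependent coefficient, $d\bigl(e^{-\beta_i(T-t)}Y_i(t)\bigr)=e^{-\beta_i(T-t)}\sigma_i\,dB_i(t)$, the two drift terms cancelling precisely because the discount factor carries the mean-reversion rate $\beta_i$. The remaining drift contributions come from $\partial_t\log h(t,T)$ and from the Itô correction $\tfrac12\,d\langle\log f\rangle_t$; I expect these to cancel the $\mu$ and the $\tfrac12\bigl(\sigma^2+\sum_{i\in I}\sigma_i^2 e^{-2\beta_i(T-t)}\bigr)$ terms, leaving zero drift and the stated diffusion. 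The main obstacle is simply the bookkeeping in this cancellation, and a useful sanity check is that $t\mapsto f(t,T)=\mathbb E[S(T)\mid\mathcal F_t]$ is a $\mathbb Q$-martingale by the tower property, so its drift must vanish; this confirms the result independently of the explicit computation.
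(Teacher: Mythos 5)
Your proposal is correct and follows essentially the same route as the paper's proof: solve \eqref{eq:X} and \eqref{eq:Y} explicitly, use $\mathcal{F}_t$-measurability and independence of the increments to factor the conditional expectation, evaluate the Gaussian exponential moments via the It\^o isometry to obtain $h(t,T)$, and then derive \eqref{eq:fdyn} by It\^o's formula. Your treatment is in fact slightly more detailed than the paper's, which leaves the It\^o step (the drift cancellation you verify, and which your martingale sanity check confirms) as ``straightforward.''
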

\begin{proof}
We have that
\begin{align*}
X(T)=X(t)+\mu(T-t)+\sigma(B(T)-B(t))\,,
\end{align*}
and
\begin{align*}
Y_i(T)=\e^{-\beta_i(T-t)}Y_i(t)+\sigma_i\int_t^T\e^{-\beta_i(T-s)}\,dB_i(s)
\end{align*}
By the $\mathcal{F}_t$-adaptedness of $X(t)$ and  $Y_i(t)$,  the independent increment property of the Brownian motion and
the independence between $B$ and $B_i$ , $i\in I$, we then find
\begin{align*}
f(t,T)&=\Lambda(T)\E[\exp(X(T)+\sum_{i\in I} Y_i(T))\,|\,\mathcal{F}_t] \\
&=\Lambda(T)\exp\left(\mu(T-t)+X(t)+ \sum_{i\in I}\e^{-\beta_i(T-t)}Y_i(t)\right)\E\left[\exp(\sigma(B(T)-B(t))\right] \\
&\qquad\qquad\times\E\left[\exp\left(\sum_{i\in I}\sigma_i\int_t^T\e^{-\beta_i(T-s)}\,dB_i(s)\right)\right] \\
&=h(t,T)\exp\left(X(t)+\sum_{i\in I}\e^{-\beta_i(T-t)}Y_i(t) \right)\,.
\end{align*}
Then \eqref{eq:fdyn} follows straightforward from   It\^o's formula.
\end{proof}

 Note that for a deterministic interest rate, the forward and futures price process coincide (see for example Bj\"ork \cite{B}, Proposition 29.6), such that we will not distinguish between the two. \\
  We move on to option pricing. Consider a European call option with exercise time $\tau \leq T$ and strike $K$ written on the forward contract with dynamics given in Proposition \ref{prop:forward}.  The no-arbitrage price of the option is then defined as
  \begin{align*}
  C_I(t; \tau, T)=e^{-r(\tau-t)} \mathbb{E}[\max\{f(\tau,T)-K,0\}|\mathcal{F}_t]\;,
  \end{align*}
 where we price the option under the same measure $\mathbb{Q}$ as the forward. 
Using It\^o's formula   we see from \eqref{eq:fdyn} that we can write the forward price at time $\tau$ as
   \begin{align*}f(\tau,T)= f(t,T)\exp\{Z(t,\tau,T)\}\;,
  \end{align*}
  where 
   \begin{align}\label{eq:Z}  Z(t,\tau,T)&=\sigma (B(\tau)-B(t)) - \frac12 \sigma^2(\tau-t)\\
   &\qquad + \sum_{i\in I}\left(\int_t^\tau e^{-\beta_i(T-s)}\sigma_i dB_i(s) - \frac12 \int_t^\tau e^{-2\beta_i(T-s)}\sigma_i^2 ds\right)\nonumber\end{align}
is normally distributed with variance
\begin{align*}
\mbox{Var}(Z(t,\tau,T))&=  \sigma^2\mathbb{E}[ (B(\tau)-B(t))^2] + \sum_{i\in I}\mathbb{E}\left[\left(\int_t^\tau e^{-\beta_i(T-s)}\sigma_i dB_i(s)\right)^2\right]\\
&=\sigma^2(\tau-t)+  \sum_{i\in I}\int_t^\tau e^{-2\beta_i(T-s)}\sigma_i^2 ds\;.
\end{align*}
Then $ Z(t,\tau,T)$ has  standard deviation
  \begin{align}\label{eq:sigma}\sigma_I( T):=\sqrt{ \sigma^2(\tau-t)
  +\sum_{i\in I} c_ie^{-2\beta_i(T-\tau ) } }\;,
	\end{align} where 
	\begin{align}\label{eq:ci}c_i:= \frac{\sigma_i^2}{2\beta_i}\left( 1- e^{-2\beta_i(\tau -t) }  \right)\;.
	\end{align}
  
Hence,  we can immediately apply Blacks' Formula for options on forwards (see Black \cite{Black}, Benth et al. \cite{BSBK-book}, Chapter 9.1.1 for the mean reverting case)
   and find
\begin{proposition}\label{prop:black}
Let $f(t,T)$ be the forward price at time $t$  with delivery at $T$ given by Proposition \ref{prop:forward}. Then price of an European call option at time $t$ with  excercise time $\tau\leq T$ and strike price $K>0$ is given by
\begin{align}\label{eq:C}
	C_I(t; \tau, K, T)=e^{-r(\tau-t)} \left\{ f(t,T) \Phi(d_{1,I}) - K \Phi(d_{2,I})\right\}\;.
	\end{align}
	Here, 
	\begin{align*}
	d_{1,I}&= d_{2,I} + \sigma_I( T)\;,\\
	d_{2,I}&= \frac{\ln\left(\frac{f(t,T)}{K}\right) - \frac12 \sigma^2_I(T)}{\sigma_I(T)}\;,
	\end{align*}
	and $\Phi$ is the cumulative standard normal probability distribution function. 
	\end{proposition}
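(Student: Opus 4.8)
The plan is to exploit the representation $f(\tau,T)=f(t,T)\exp\{Z(t,\tau,T)\}$ already established above, together with the fact that, conditional on $\mathcal{F}_t$, the random variable $Z(t,\tau,T)$ is Gaussian with variance $\sigma_I^2(T)$. First I would record that its mean equals $-\tfrac12\sigma_I^2(T)$: the two drift integrals in \eqref{eq:Z} are exactly minus one half of the stochastic-integral variance terms computed just above, so by the independent-increment property $Z(t,\tau,T)$ is independent of $\mathcal{F}_t$ and can be written as $Z(t,\tau,T)=-\tfrac12\sigma_I^2(T)+\sigma_I(T)W$ with $W$ a standard normal variable. This reduces the whole statement to the classical Black computation.

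Next I would insert this representation into the defining expectation, using the $\mathcal{F}_t$-measurability of $f(t,T)$ and the independence of $W$ to replace the conditional expectation by an ordinary one:
\[
C_I(t;\tau,K,T)=e^{-r(\tau-t)}\,\E\left[\max\{f(t,T)\e^{-\frac12\sigma_I^2(T)+\sigma_I(T)W}-K,\,0\}\right].
\]
The payoff is strictly positive precisely on the event $f(t,T)\e^{-\frac12\sigma_I^2(T)+\sigma_I(T)W}>K$, which rearranges to $W>-d_{2,I}$ after taking logarithms and solving the linear inequality for $W$; the algebra matches the definition of $d_{2,I}$ in the statement. Writing the expectation as an integral of $(f(t,T)\e^{-\frac12\sigma_I^2+\sigma_I w}-K)$ against the standard normal density $\varphi$ over $(-d_{2,I},\infty)$ then splits the problem into two terms.

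For the $K$-term I would simply use $\int_{-d_{2,I}}^{\infty}\varphi(w)\,dw=1-\Phi(-d_{2,I})=\Phi(d_{2,I})$, giving the contribution $-K\Phi(d_{2,I})$. For the $f(t,T)$-term the key manipulation is completing the square: the integrand $\e^{-\frac12\sigma_I^2+\sigma_I w}\varphi(w)$ equals $\varphi(w-\sigma_I)$, so after the substitution $v=w-\sigma_I(T)$ the lower limit shifts to $-d_{2,I}-\sigma_I(T)=-d_{1,I}$ and the integral becomes $\Phi(d_{1,I})$, yielding $f(t,T)\Phi(d_{1,I})$. Collecting the two terms and multiplying by the discount factor $e^{-r(\tau-t)}$ produces \eqref{eq:C}.

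The computation is entirely standard, so the only real care needed is bookkeeping. I expect the main (and essentially only) obstacle to be matching constants and signs: verifying that the mean of $Z$ is indeed $-\tfrac12\sigma_I^2(T)$, and tracking the sign conventions so that the exercise threshold and the shifted integration limit line up exactly with the stated $d_{2,I}$ and $d_{1,I}=d_{2,I}+\sigma_I(T)$. Since the excerpt has already carried out the genuinely model-specific work of identifying the correct standard deviation $\sigma_I(T)$ through \eqref{eq:sigma}--\eqref{eq:ci}, there is no substantive difficulty beyond this careful verification, which is why the statement can be obtained by a direct appeal to Black's formula.
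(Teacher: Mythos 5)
Your proof is correct and follows essentially the same route as the paper: the paper performs exactly the same reduction (the lognormal representation $f(\tau,T)=f(t,T)\e^{Z}$ with $Z$ Gaussian, mean $-\tfrac12\sigma_I^2(T)$ and variance $\sigma_I^2(T)$, independent of $\mathcal{F}_t$) and then simply invokes Black's formula by citation, whereas you write out the standard completing-the-square computation that the citation stands for. All your sign and constant checks (exercise threshold $W>-d_{2,I}$, shift to $-d_{1,I}$) are consistent with the stated $d_{1,I}, d_{2,I}$, so the proposal is a correct, self-contained expansion of the paper's argument.
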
 
	In the following chapter we analyse the role of the mean reverting components in Black's formula.

\section{The effect of mean reversion to the option price}\label{ch:pricingerror}
Electricity forwards deliver the underlying energy usually over a period $[T_1,T_2]$, as for example a month, a quarter of a year or a year.  The delivery period smooths the effect of the mean reversion in the forward price and should thus also lower its influence in the option price. This effect depends on the speed of mean reversion and the length of the delivery period.
 In our definition \eqref{eq:forward} of the forward price, we did not model the delivery period explicitly. Nevertheless, we choose  an alternative approach to take into account the delivery period, that is suitable especially in combination with options. Namely,  we  approximate the delivery period by its midpoint $T=\frac12(T_1 + T_2)$ and model the period by this point of time. As options on forwards are exercised at the beginning or shortly before the start of the delivery period, the length of the delivery period is given by  $2(T-\tau)$, the time from exercise of the option until the midpoint of the delivery period. For example, for a monthly delivery period with 30 days, we have that $T-\tau=15$ days, for forward that deliver the energy over a quarter of a year, it is $T-\tau= 45$ days. We will consider $T\to \infty$, that is we consider  delivery periods with increasing length.\\
Another approach is to model the delivery period explicitly and define the forward as expected average spot over this period, say \begin{align}\label{eq:avforward}F(t,T_1,T_2)=\mathbb{E}\left[\frac{1}{T_2-T_1}\int_{T_1}^{T_2}S(t) dt\Big|\mathcal{F}_t\right]\;.\end{align} The drawback of this definition is, that it does not lead to analytical closed form expressions for the forward price \eqref{eq:avforward}, when the spot price model is exponential.  For pricing options with Black's formula though, we need to consider an exponential framework. \\
We assume that the forward curve at time $t$ is given by \eqref{eq:fexpl}. 
Note that $f(t,T)$ goes to infinity for $T\to\infty$ due to the long term component $X$ in the underlying spot.  Therefore, we consider options where the strike $K$ is time dependent, too. That is, we let the relation between the strike and initial price of the underlying being proportional to each other:
\begin{align}\label{eq:d}
\frac{f(t,T)}{K(T)}=\delta
\end{align}
for some $\delta>0$. Like that, for $\delta=1$ we consider options at the money, for $\delta>1$ options in the money and for $\delta <1$ options out of the money.\\
We aim to examine the influence of one or more mean reverting components in the spot. Therefore, we consider additionally to \eqref{eq:spot} a spot price dynamics where we neglect some of the mean reverting components in $I$. For this purpose choose a subset $J\subset I$ of components and approximate \eqref{eq:spot} by 

\begin{align}\label{eq:spotJ}
S_J(t)=\Lambda(t) \exp\{X(t) +\sum_{i\in J} Y_i(t)\}\;.
\end{align}  The neglected mean reverting factors are then given by the set $I\setminus J$.
 Analogous to \eqref{eq:sigma} the corresponding  volatility of the log forward price $f_J(t,T)$ of \eqref{eq:spotJ}  is given by  
\begin{align}\sigma_J( T):= \sqrt{\sigma^2(\tau-t)
  +\sum_{i\in  J} c_ie^{-2\beta_i(T-\tau ) }} \;.
 \end{align} 
Let the price of an option $C_J(t; \tau, T)$ with underlying mean reverting factors $J$ be given by Black's formula as in Proposition \ref{prop:black}, where $\sigma_J( T)$ replaces  $\sigma_I( T)$:
\begin{align}\label{eq:CJ}
	C_J(t; \tau, T):=e^{-r(\tau-t)} \left\{ f_I(t,T) \Phi(d_{1,J}) - K \Phi(d_{2,J})\right\}\;,
	\end{align}
	where 
	\begin{align*}
	d_{1,J}&= d_{2,J} + \sigma_J( T)\;,\\
	d_{2,J}&= \frac{\ln\left(\frac{f(t,T)}{K}\right) - \frac12 \sigma^2_J(T)}{\sigma_J(T)}\;.
	\end{align*}
Here, we keep the initial curve to be the same, that is given by \eqref{eq:fexpl},  and add the index $I$ compared to \eqref{eq:fexpl} to emphasise the dependence on the set of mean reverting factors $I$. The initial curve is explicitly given at time $t$, while the volatility has to be estimated. Therefore it it is reasonable to consider options with the same initial curve, but different volatilities. Also, the more refined structure for the forward curve implied  by more mean reverting components is assumed to be the most fitting one. Hence, in the option price the chosen mean reverting components are reflected by the different volatilities $\sigma_I(T)$ and $\sigma_J(T)$ only.
If we do not consider any of the mean reverting components $Y_i$, $i\in I$, the corresponding forward price dynamics \eqref{eq:fdyn} reduces to a geometric Brownian motion
\begin{align}\label{eq:fdynB76}
  df_{B}(t,T)= f_{B}(t,T)\sigma dB(t) 
\end{align}
  and  we define $$\sigma_{B}:= \sqrt{\sigma^2(\tau-t)}\;.$$ Note  that  $\sigma_{I}( T)$ as well as $\sigma_{J}( T)$ converge to $\sigma_{B}$ as $T\to \infty$,
 giving an indication that for option pricing, the longer the delivery period, the more important is the non-stationary long term factor $X$ compared to the mean reverting factors $Y_i$, $i\in I$.  
Furthermore, it is possible to bound the volatilities by constants independent from   $T$
\begin{align}\label{eq:boundedsigma} 
	\sigma_B^2 \leq \sigma^2_I( T) \leq \sigma_B^2 + \sum_{i\in I} c_i\;,
\end{align}
with $c_i$ given as in \eqref{eq:ci}. 
	
We move on to state the upper and lower error estimates of the difference of the option price with set of mean reverting components $I$ and $J$.  For simplicity, we set $r=0$. 

\begin{proposition}\label{th:pricingerror}
Let $C_I$ be given by Proposition \ref{prop:black} and recall $C_J$ from \eqref{eq:CJ}.
For $\tau\leq T$ we have that
\begin{align*}
\alpha\sum_{i\in I\setminus J} c_i e^{ -( 2\beta_i-b)(T-\tau) }\leq C_I(t;\tau, T)-C_J(t;\tau,T)\leq \gamma\sum_{i\in I\setminus J} c_i e^{ -( 2\beta_i-b)(T-\tau) }\;,
\end{align*}
where
\begin{align*}
b&=\mu + \frac12 \sigma^2\;,\\
\alpha&=  \Lambda_l\frac{\delta}{2 \sqrt{2 \pi (\sigma_B^2 +\sum_{i \in I} c_i)}}\exp\left\{b (\tau-t) - \frac12 \left( \frac{|\ln \delta|}{\sigma_B} + \frac12 \sqrt{\sigma^2_B + \sum_{i\in I} c_i}\right)^2 +X(t) +\sum_{i\in I^-} Y_i(t)\right\}\;,\\
\gamma&=  \Lambda_u \frac{\delta}{2 \sqrt{2 \pi \sigma_B^2 }} \exp\left\{b (\tau-t) +X(t) +\sum_{i\in I^+} Y_i(t)\right\} \;, 
\end{align*}
and $\label{eq:I}I^-=\{i\in I: Y_i(t)\leq0\}$ as well as $I^+=\{i\in I: Y_i(t)>0\}$.
\end{proposition}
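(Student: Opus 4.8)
The plan is to exploit that $C_I$ and $C_J$ are built from Black's formula of Proposition~\ref{prop:black} with the \emph{same} forward price $f(t,T)$ and strike $K$, differing only through the total standard deviations $\sigma_I(T)$ and $\sigma_J(T)$. Hence the whole price difference is a statement about the sensitivity of Black's price to volatility. Writing the price as a function $\Sigma\mapsto \mathcal{C}(\Sigma)=f\,\Phi(d_1(\Sigma))-K\,\Phi(d_2(\Sigma))$ with $d_1(\Sigma)=\frac{\ln\delta}{\Sigma}+\frac{\Sigma}{2}$ and $d_2(\Sigma)=d_1(\Sigma)-\Sigma$ (using $\ln(f/K)=\ln\delta$ from \eqref{eq:d}), the elementary identity $f\phi(d_1)=K\phi(d_2)$ collapses the derivative to the Black vega $\mathcal{C}'(\Sigma)=f\,\phi(d_1(\Sigma))$. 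Passing to the variance variable $V=\Sigma^2$ I would then represent
\[
C_I(t;\tau,T)-C_J(t;\tau,T)=\int_{\sigma_J^2(T)}^{\sigma_I^2(T)}\frac{f(t,T)\,\phi\left(d_1(\sqrt V)\right)}{2\sqrt V}\,dV .
\]

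The point of using $V$ rather than $\Sigma$ is that the length of the integration interval is exactly the quantity appearing in the bounds: by \eqref{eq:sigma} and \eqref{eq:ci}, $\sigma_I^2(T)-\sigma_J^2(T)=\sum_{i\in I\setminus J}c_i\,e^{-2\beta_i(T-\tau)}$. Since $I\supseteq J$ the integrand is nonnegative, so both bounds will be obtained by replacing the integrand by its supremum (for $\gamma$) and its infimum (for $\alpha$) over $V\in[\sigma_J^2,\sigma_I^2]$, pulling this constant out of the integral, and multiplying by the interval length above. The exponential prefactor in the statement will appear once I feed the growth of $f(t,T)$ into this product.

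The substantive work is the uniform two-sided control of the integrand, and I would treat its three factors separately. For the forward factor I expand $f(t,T)$ as in Proposition~\ref{prop:forward}: the seasonality is pinned between constants $\Lambda_l\le\Lambda\le\Lambda_u$, the drift part of $h(t,T)$ contributes $e^{b(T-t)}$ with $b=\mu+\tfrac12\sigma^2$, and this $e^{b(T-t)}$ combined with the $e^{-2\beta_i(T-\tau)}$ from the interval length produces precisely the reported $e^{-(2\beta_i-b)(T-\tau)}$. Because $e^{-\beta_i(T-t)}\in(0,1]$, the random part $\sum_{i\in I}e^{-\beta_i(T-t)}Y_i(t)$ is bounded above by $\sum_{i\in I^+}Y_i(t)$ and below by $\sum_{i\in I^-}Y_i(t)$, which is where the index sets $I^\pm$ enter; the moneyness $\delta$ enters through $\ln(f/K)=\ln\delta$ inside $d_1$ and through the normalisation of this prefactor. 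For the $1/\sqrt V$ factor I use the $T$-independent envelope \eqref{eq:boundedsigma}, i.e. $\sigma_B\le\Sigma\le\sqrt{\sigma_B^2+\sum_{i\in I}c_i}$. For the Gaussian factor the upper estimate is the trivial $\phi\le 1/\sqrt{2\pi}$ (giving the $\sqrt{2\pi\sigma_B^2}$ in $\gamma$), while the lower estimate needs a worst-case control of $|d_1|$: using $|d_1(\Sigma)|\le \frac{|\ln\delta|}{\Sigma}+\frac{\Sigma}{2}\le \frac{|\ln\delta|}{\sigma_B}+\frac12\sqrt{\sigma_B^2+\sum_{i\in I}c_i}$ yields $\phi(d_1)\ge \frac{1}{\sqrt{2\pi}}\exp\{-\frac12(\frac{|\ln\delta|}{\sigma_B}+\frac12\sqrt{\sigma_B^2+\sum_{i\in I}c_i})^2\}$, which is exactly the Gaussian factor inside $\alpha$.

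I expect the main obstacle to be this lower bound on $\phi(d_1)$: one must bound $|d_1|$ uniformly over the entire volatility window, and the two terms $|\ln\delta|/\Sigma$ and $\Sigma/2$ are optimised at opposite ends of that window, so they must be estimated at $\Sigma=\sigma_B$ and at $\Sigma=\sqrt{\sigma_B^2+\sum_{i\in I}c_i}$ respectively rather than at a common point, and the absolute value $|\ln\delta|$ is what makes the resulting estimate insensitive to whether the option is in or out of the money. Once the three factors are bounded, assembling $\alpha$ and $\gamma$ is bookkeeping: collect the extreme values, multiply by $\sum_{i\in I\setminus J}c_i e^{-2\beta_i(T-\tau)}$, and absorb the $e^{b(T-t)}$ growth into the summands to reach $e^{-(2\beta_i-b)(T-\tau)}$. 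The only care needed is to keep every bound $T$-uniform, so that the estimate holds for all $\tau\le T$ simultaneously.
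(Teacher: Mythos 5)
Your proposal is correct and takes essentially the same route as the paper: the paper likewise reduces the price difference to the variance-sensitivity (vega) of Black's formula --- via the mean value theorem, where you use the equivalent integral representation --- and then applies exactly your three uniform bounds, namely $1/(2\sqrt{z})$ controlled by \eqref{eq:boundedsigma}, the Gaussian density controlled through the triangle-inequality estimate of $|d_1|$, and $f(t,T)$ controlled by $\Lambda_l,\Lambda_u$, the drift factor $e^{b(T-t)}$ and the sign split $I^{\pm}$, before collecting terms. The only mismatch is at the level of the explicit constants (your derivation yields no prefactor $\delta$, and your upper bound retains the factor $\exp\{\tfrac12\sum_{i\in I}\sigma_i^2/(2\beta_i)\}$); the same discrepancy is present in the paper's own argument, whose final ``collecting the terms'' step does not literally produce the stated $\alpha$ and $\gamma$ either.
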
	
\begin{proof}
Consider Blacks' option price from Proposition \ref{prop:black} as a function of the variance of the underlying:
\begin{align}\label{eq:d1}
C(z)&=f(t,T) \Phi(d_1(z)) - K\Phi(d_2(z))\;,\qquad\text{where}\nonumber\\
d_1(z)&= d_2(z)+\sqrt z,\\
d_2(z)&= \frac{\ln( \frac{f(t,T)}{K}) -\frac12 z}{\sqrt z}\nonumber\;.
\end{align}
Using that $d_1(z)= d_2(z)+\sqrt z$, we find (analogously to the determination of the vega of an option) 
\begin{align*}
\frac{dC}{d z}= \frac{1}{2 \sqrt z } f(t,T) \Phi'(d_1(z))>0\;, 
\end{align*} 
where $\Phi'(x)= \frac{1}{\sqrt{2\pi}}\exp\{-\frac12 x^2\}$. 
Thus, $C$ is continuous in $z>0$ and strictly increasing and with the mean value theorem it follows the existence of a $\xi \in (\sigma^2_J( T), \sigma^2_I( T))$ such that 
\begin{align}\label{eq:mvt}C(\sigma^2_I( T))-C(\sigma^2_J( T))
= \frac{dC}{d z}\left(\xi\right) \left(\sigma^2_I(T)-\sigma^2_J( T)\right)
= \frac{1}{2 \sqrt \xi }  \Phi'(d_1(\xi))f(t,T)\left(\sigma^2_I(T)-\sigma^2_J( T)\right)\;. \end{align} First, note that $$\sigma^2_I(T)-\sigma^2_J( T)=  \sum_{i\in I\setminus J} c_i e^{ - 2\beta_i(T-\tau) }\;.$$
Using that $$\sigma_B^2 \leq \sigma^2_J( T)<\xi < \sigma^2_I( T)\leq \sigma_B^2 +\sum_{i\in I} c_i$$ with $c_i$ as in \eqref{eq:ci} we find that
$$ \frac{1}{2\sqrt{\sigma_B^2 +\sum_{i\in I} c_i}} <\frac{1}{2\sqrt{\xi}}< \frac{1}{2\sqrt{\sigma_B^2 }}$$ and
$$ \frac{1}{\sqrt{2\pi}}\exp\left\{ -\frac12 \left( |\ln \delta| \frac{1}{\sqrt{\sigma_B^2}}+ \frac12 \sqrt{\sigma_B^2 +\sum_{i\in I} c_i}\right)^2\right\}\leq \Phi'(d_2(\xi)) \leq \frac{1}{\sqrt{2\pi}}\;.$$ 
Let 
\begin{align}\label{eq:I}I^-=\{i\in I: Y_i(t)\leq0\}\text{ and }  I^+=\{i\in I: Y_i(t)>0\}\;.
\end{align}
Then we find for the initial curve \eqref{eq:fexpl}
$$ f(t,T) \geq  \Lambda_l \exp\left\{ b(\tau-t)  +X(t) +\sum_{i\in I^-} Y_i(t)\right\}\exp\left\{- b(T-\tau)\right\}$$ and
$$ f(t,T) \leq  \Lambda_u \exp\left\{ b(\tau-t) +\frac12 \sum_{i\in I}\frac{\sigma_i^2}{2 \beta_i} +X(t) +\sum_{i\in I^+} Y_i(t) \right\}\exp\left\{ -b(T-\tau)\right\}\;,$$ where $\Lambda_l$ and $\Lambda_u$ denote the lower and upper bounds of $\Lambda$.
Collecting the terms the result follows. 
\end{proof}

Proposition \ref{th:pricingerror} gives  an analytical formula for the error's  asymptotic behaviour. It is a weighted sum of  exponentials,  determined through the length from exercise of the option  until delivery of the forward, the speed of the neglected mean reverting components and the characteristics of the long term component. The difference between the option prices is fundamentally determined through the differences of the variances. In fact, we have that
\begin{align*}
C_I(t;\tau, T)-C_J(t;\tau,T)\sim c \exp\{ b(T-\tau)\}\left(\sigma^2_I(T)-\sigma^2_J( T)\right)\;
\end{align*}
for $T$ big. \\
Although the volatility is the only parameter that varies in the options prices $C_I(t;\tau, T)$ and  $C_J(t;\tau, T)$ and 
$$ \left|\sigma_I(T)-\sigma_J( T)\right|\to 0 \qquad \text{for}\qquad T\to \infty\;,$$ the difference between the corresponding option prices  does not always converge to zero for $ T\to \infty$ if there is a non-stationary component $X$ in the spot. In this case, the initial forward curve converges to infinity with exponential rate. Now, looking at \eqref{eq:mvt} the difference of the variances converges to zero  with exponential rate, too, such that the convergence depends on what rate is larger.  
 If for all neglected mean reverting components $i \in I\setminus J$ we have that $2\beta_i-(\mu + \frac12 \sigma^2)>0$ then the difference between the two option prices converges to zero.

We assume the spot price dynamics to be given immediately under the pricing measure $\mathbb{Q}$ for the forward. In our setting, typically a Girsanov transform is used to determine this pricing measure, which changes the drift of the driving Brownian motions under $\mathbb{Q}$ compared to the physical measure $\mathbb{P}$. For the base component $X$ as in \eqref{eq:X}  this results in a change of the drift, captured in the $\mu$. Thus, in Proposition \ref{th:pricingerror} it influences  the  speed of the convergence. For the stationary factors $Y_i$, $i\in I$ as in \eqref{eq:Y} a change of measure influences the mean reversion level, which has impact on the constants $\alpha$ and $\gamma$  in Proposition \ref{th:pricingerror}.

   \begin{figure}[ht]
        \centering
        \begin{minipage}{.5\textwidth}
        \centering
     \includegraphics[width=0.9\linewidth]{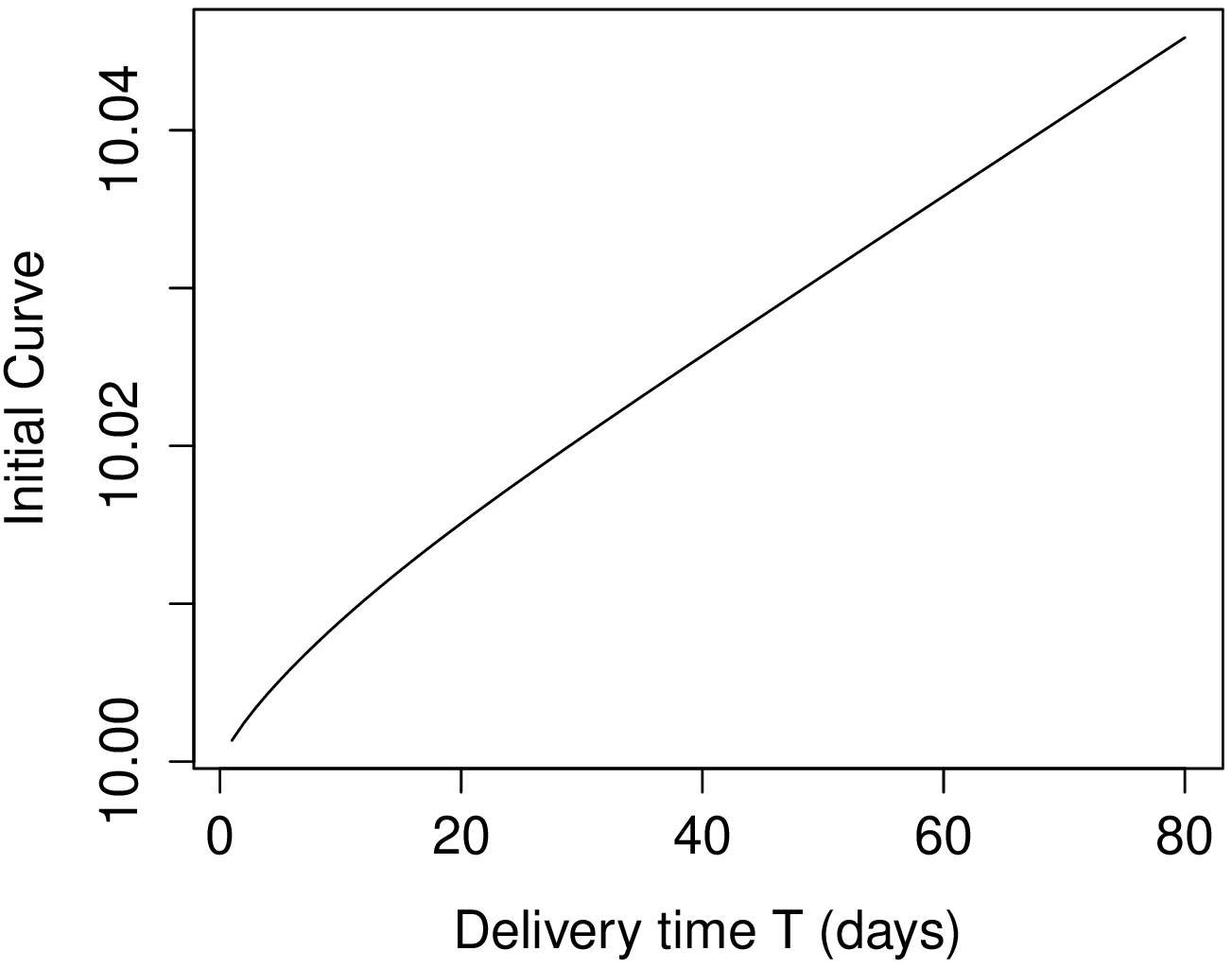}
       \caption{Initial curve. }
      \label{fig:InitialCurve}
        \end{minipage}%
        \begin{minipage}{.5\textwidth}
            \centering
            \includegraphics[width=0.9\linewidth]{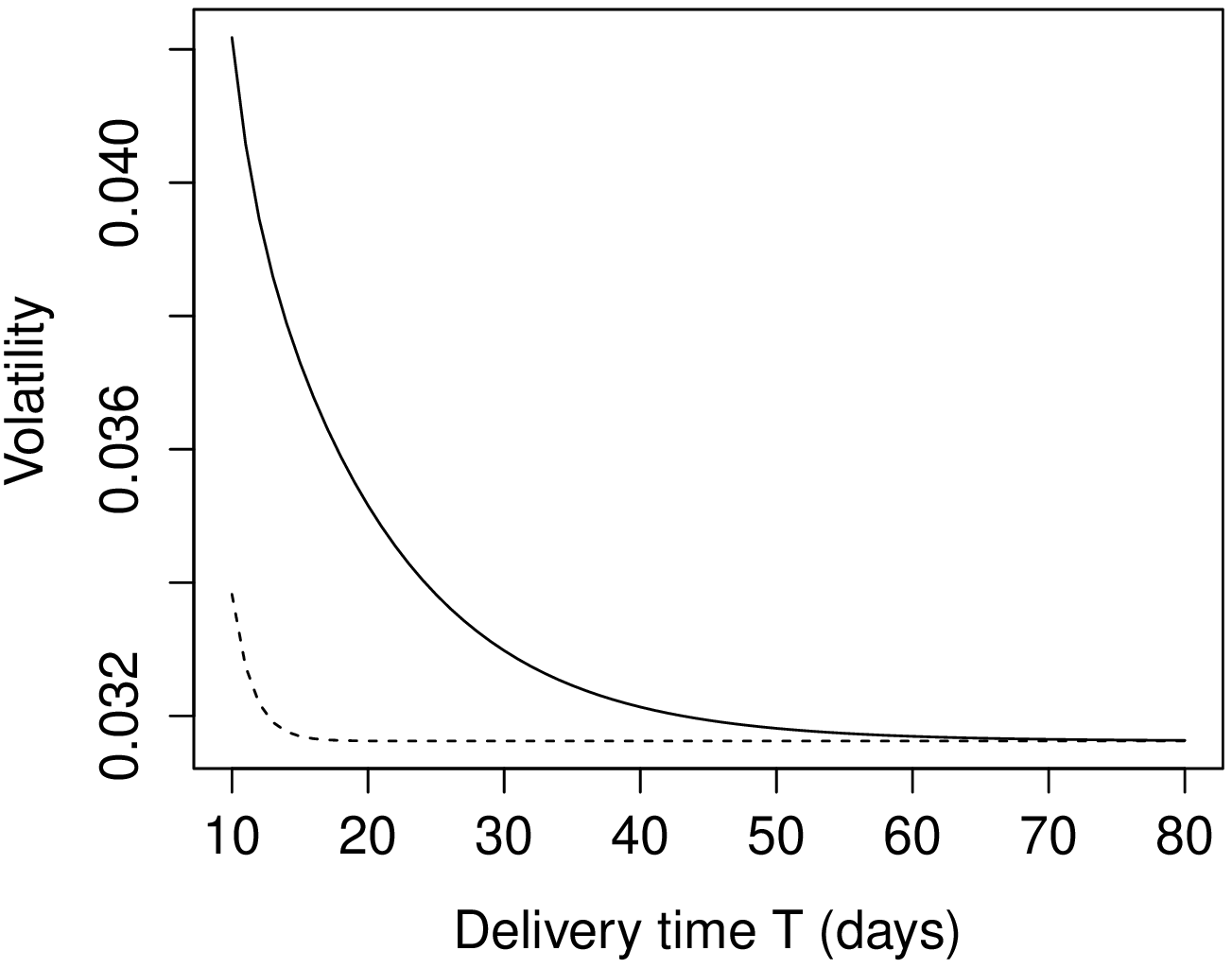}
               \caption{$\sigma_I(T)$ and $\sigma_J(T)$ (dashed).}
                    \label{fig:Volatility}
         \end{minipage} 
  \end{figure} 
  
We illustrate our result with an example and consider  options at the money, that is $\delta=1$, at time $t=0$ with exercise at $\tau= 10$.  Assume that there are two mean reverting components $I=\{1, 2\}$, where we have a fast mean reverting component with $\beta_1=0.3466$, corresponding to a half life of two days, and a slowly mean reverting component with $\beta_2=0.0495$ corresponding to a half life of two weeks. We want to illustrate the pricing error implied by the slowly mean reverting component, such that we choose to model the fast mean reverting component only and $J=\{1\}$. Furthermore, let $\sigma_1=\sigma_2=\sigma=0.01$,  $\mu=0$ and   $X(0)=Y_1(0)=Y_2(0)=0$ as well as $\Lambda(T)=10$ for all $T$. In Figures \ref{fig:InitialCurve} and 2 the initial curve $f(0,T)$  and the volatilities $\sigma_I(T)$ and $\sigma_J(T)$ are depicted for $T=0,\dots, 80$. Figure \ref{fig:PricingError}   then shows the absolute pricing error together with its lower and upper bound and we find that our bounds are very sharp.  The relative pricing error is shown in Figure \ref{fig:RelativePricingError}. Here we see that for a delivery period of a month, corresponding to $T=25$, we have a relative pricing error of 7.1 \%, which is quite a heavy mispricing. For a delivery period of a quarter of a year, corresponding to $T=55$, we find that we only misprice by $0.2\%$.

  \begin{figure}[ht]
       \centering
       \begin{minipage}{.5\textwidth}
       \centering
    \includegraphics[width=0.9\linewidth]{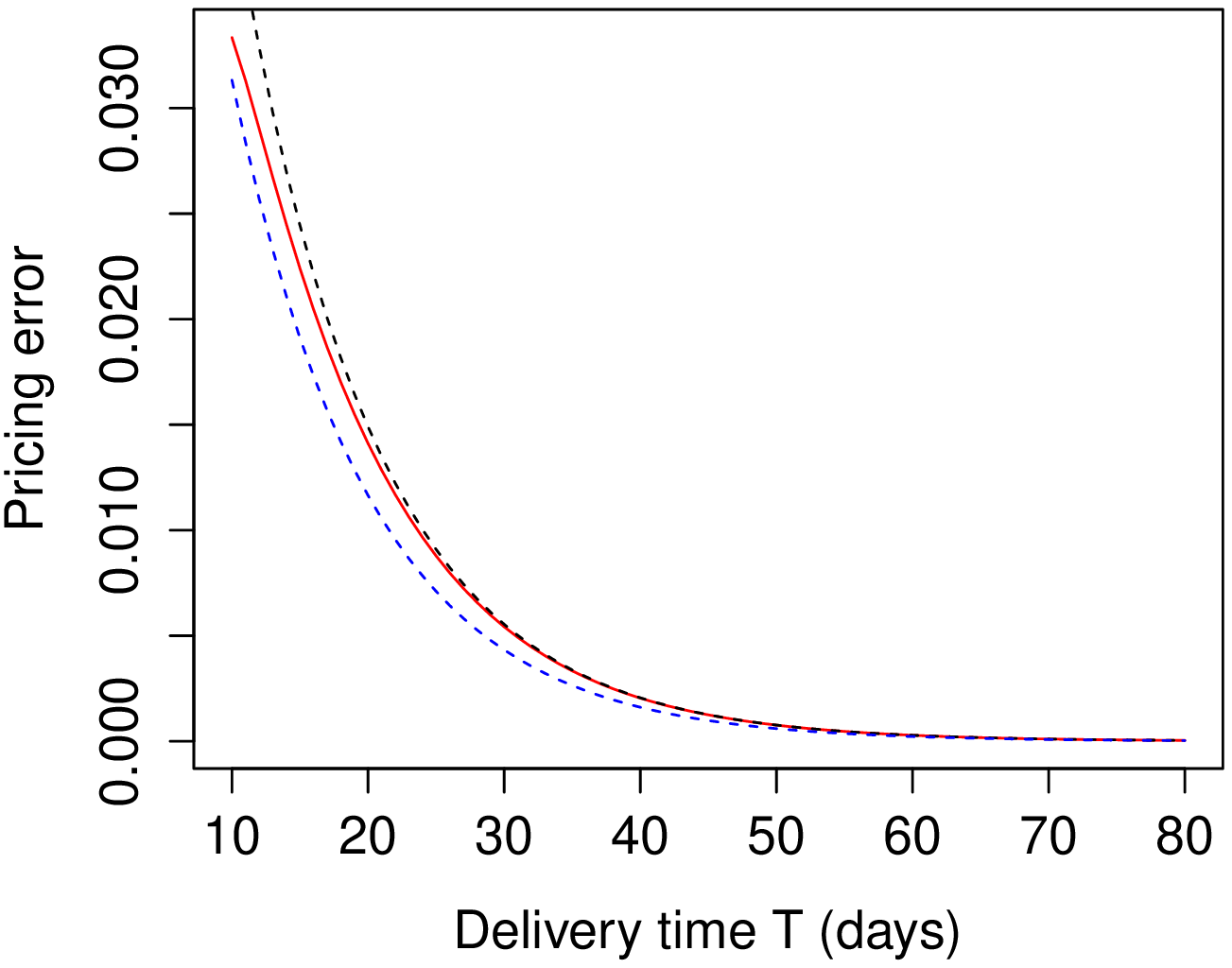}
      \caption{Pricing error and bounds (dashed).    }
   \label{fig:PricingError}
       \end{minipage}
       \begin{minipage}{.5\textwidth}
           \centering
           \includegraphics[width=0.9\linewidth]{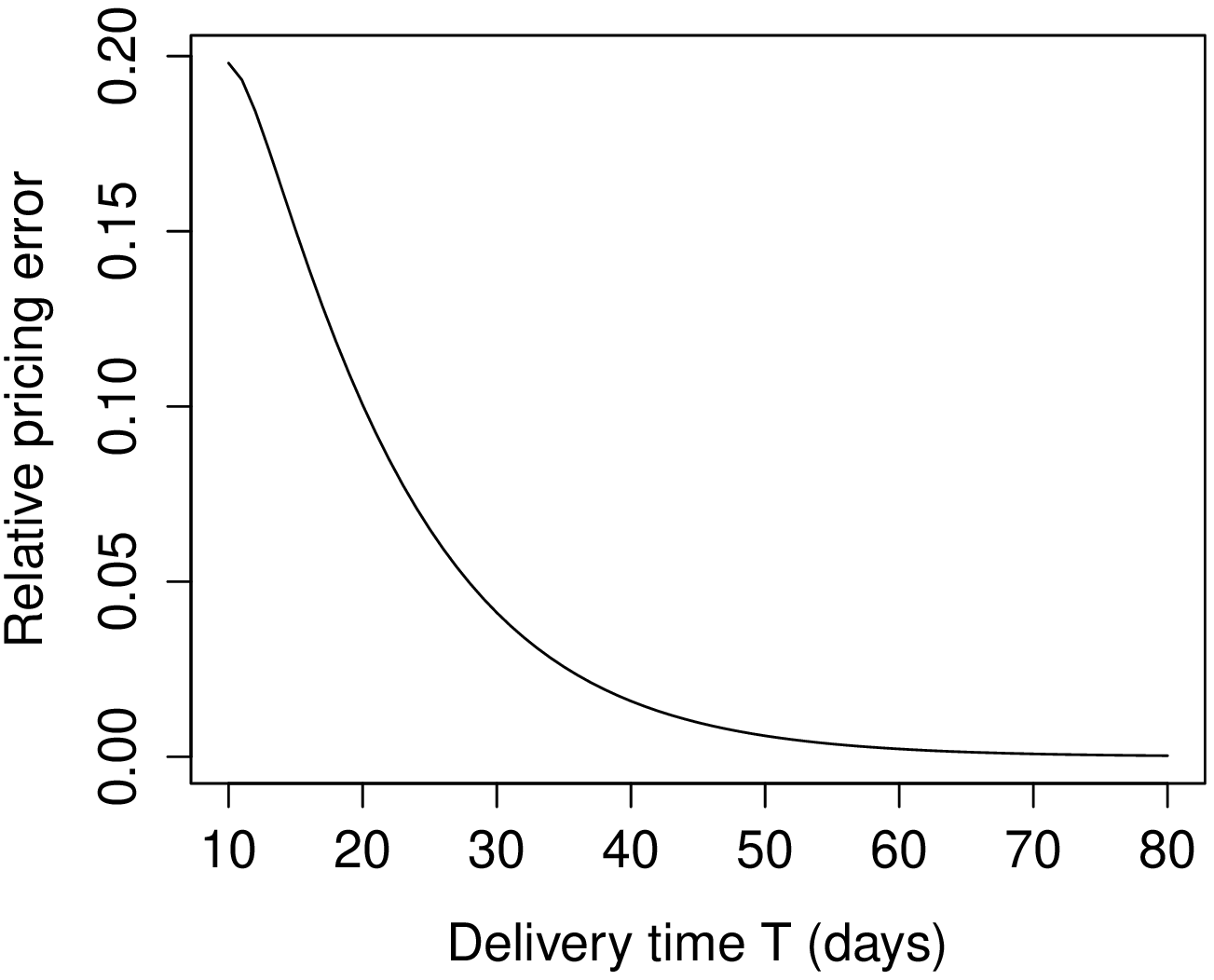}
              \caption{Relative pricing error. }
                 \label{fig:RelativePricingError}
        \end{minipage} 
 \end{figure} 

\section{The effect of mean reversion to hedging}\label{ch:hedging}
For option prices given by Black's formula as in Proposition \ref{prop:black}, the delta hedge, the derivative of the option price with respect to the initial value, is given by 
$$\Delta_I(t;\tau,T)=e^{-r(\tau-t)}\Phi(d_{1,I}) \;,$$ where 
	\begin{align*}
	d_{1,I}&= d_{2,I} + \sigma_I( T)\;,\\
	d_{2,I}&= \frac{\ln\left(\frac{f_I(t,T)}{K}\right) - \frac12 \sigma^2_I(T)}{\sigma_I(T)}\;
	\end{align*}
(see for example Benth et al. \cite{BSBK-book}).
As for the option price, we aim to quantify the influence of the mean reverting components to the hedging strategy depending on the length of the delivery period of the underlying future. 
Therefore, we derive upper and lower  bounds for the error we commit if we take into account a set of mean reverting components $I$ or only a subset $J\subset I$. Let the delta hedge  $\Delta_J(t;\tau, T)$  then be given by
$$\Delta_J(t;\tau,T)=e^{-r(\tau-t)}\Phi(d_{1,J}) \;,$$ where 
\begin{align*}
d_{1,J}&= d_{2,J} + \sigma_J( T)\;,\\
	d_{2,J}&= \frac{\ln\left(\frac{f_I(t,T)}{K}\right) - \frac12 \sigma^2_J(T)}{\sigma_J(T)}\;.
 	\end{align*}
 	As in the previous chapter, the initial curve is given by $f_I(t,T)$, as the initial curve is observable at time $t$ and we assume that the more refined shape given by the mean reverting components in $I$ leads to  a better description of the initial curve. Consequently, the hedging strategies $\Delta_I(t;\tau, T)$ and $\Delta_J(t;\tau, T)$ differ in the volatility only.
  We find the following
\begin{proposition}\label{th:hedgingerror}
For $2\ln(\delta)\leq \sigma_B^2$ or $2\ln(\delta)\geq \sigma_B^2+ \sum_{i\in I} c_i$ the hedging error can be bounded by
\begin{align*}
h\sum_{i\in I\setminus J} c_i e^{ -2\beta_i(T-\tau)}\leq | \Delta_I(t;\tau,T) - \Delta_J(t;\tau,T)| \leq g\sum_{i\in I\setminus J} c_i e^{ -2\beta_i(T-\tau)}\;,
\end{align*}
where
\begin{align*}
g= \begin{cases}
\frac{1}{4\sqrt{2\pi}}\left(\sigma_B^2\right)^{-\frac32}\left|\sigma_{B}^2+\sum_{i\in I}c_i - 2\ln(\delta)\right|\;,& \text{ if }\; 2 \ln(\delta) \leq \sigma_B^2\;,\\
\frac{1}{4\sqrt{2\pi}}\left(\sigma_B^2\right)^{-\frac32}\left|\sigma_{B}^2 - 2\ln(\delta)\right|\;,& \text{ if }\; \sigma_{B}^2+\sum_{i\in I}c_i  \leq 2 \ln(\delta),
\end{cases}
\end{align*}
and
\begin{align*}
h=
\begin{cases}
 \frac{k}{4\sqrt{2\pi}}\left(\sigma_{B}^2+\sum_{i\in I}c_i\right)^{-\frac32}\left|\sigma_{B}^2 - 2\ln(\delta)\right|\;,& \text{if } 2 \ln(\delta) \leq \sigma_B^2\;,\\
\frac{k}{4\sqrt{2\pi}}\left(\sigma_{B}^2+\sum_{i\in I}c_i\right)^{-\frac32}\left|\sigma_{B}^2+\sum_{i\in I}c_i - 2\ln(\delta)\right|\;,&\text{if }  \sigma_{B}^2+\sum_{i\in I}c_i  \leq 2 \ln(\delta)\,,
\end{cases}
\end{align*}
with
$$k=\exp\left\{-\frac12\left(\ln(\delta)^2\frac{1}{\sigma_B^2}+|\ln(\delta)|+ \frac14\left(\sigma_{B}^2+\sum_{i\in I}c_i\right)\right)\right\}\,,$$ and $c_i$, $i\in I$ given by \eqref{eq:ci}.\end{proposition}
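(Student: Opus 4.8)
The plan is to run the mean value theorem argument of Proposition \ref{th:pricingerror}, but now applied to the delta itself. Keeping $r=0$ as in Section \ref{ch:pricingerror} (the discount factor $e^{-r(\tau-t)}$ is common to both deltas and would otherwise merely multiply the bounds), the moneyness relation \eqref{eq:d} gives $\ln(f_I(t,T)/K)=\ln\delta$, so $d_{1,J}=\ln\delta/\sigma_J(T)+\tfrac12\sigma_J(T)$ and likewise for $I$. Hence the delta depends on the chosen factors only through the variance, and I would view it as the one-variable function
$$D(z)=\Phi(d_1(z)),\qquad d_1(z)=\frac{\ln\delta}{\sqrt z}+\frac12\sqrt z,$$
so that $\Delta_I(t;\tau,T)-\Delta_J(t;\tau,T)=D(\sigma_I^2(T))-D(\sigma_J^2(T))$.

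First I would differentiate. A direct computation yields
$$D'(z)=\Phi'(d_1(z))\,\frac14 z^{-3/2}(z-2\ln\delta),$$
so the sign of $D'$ is governed entirely by the factor $z-2\ln\delta$. This is precisely where the two hypotheses enter: by \eqref{eq:boundedsigma} both variances lie in $[\sigma_B^2,\sigma_B^2+\sum_{i\in I}c_i]$, so the assumption $2\ln\delta\le\sigma_B^2$ forces $z-2\ln\delta\ge 0$ on this interval, while $2\ln\delta\ge\sigma_B^2+\sum_{i\in I}c_i$ forces $z-2\ln\delta\le 0$. In either case $D$ is monotone on $[\sigma_J^2(T),\sigma_I^2(T)]$, which is what makes a clean two-sided bound possible. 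If $2\ln\delta$ were allowed strictly inside the interval, $D'$ would change sign and the estimate would break down — this is the reason for the restriction, and I regard checking it as the conceptual heart of the proof.

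Next, by the mean value theorem there is $\xi\in(\sigma_J^2(T),\sigma_I^2(T))$ with $\Delta_I-\Delta_J=D'(\xi)\,(\sigma_I^2(T)-\sigma_J^2(T))$, and exactly as in the previous proof $\sigma_I^2(T)-\sigma_J^2(T)=\sum_{i\in I\setminus J}c_i e^{-2\beta_i(T-\tau)}$. It then remains to bound the three factors of $|D'(\xi)|$ using $\sigma_B^2\le\xi\le\sigma_B^2+\sum_{i\in I}c_i$. The factor $\tfrac14\xi^{-3/2}$ is squeezed between $\tfrac14(\sigma_B^2+\sum_{i\in I}c_i)^{-3/2}$ and $\tfrac14(\sigma_B^2)^{-3/2}$; the factor $|\xi-2\ln\delta|$ attains its extremes at the two endpoints of the interval, with the roles of the endpoints swapping between the two cases, which is the one genuinely error-prone bookkeeping step; and for the upper estimate the density satisfies simply $\Phi'(d_1(\xi))\le 1/\sqrt{2\pi}$. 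Assembling the largest values of the three factors produces $g$ in each case.

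The only calculation needing a little care is the lower bound on the density, which produces $k$. Expanding $d_1(\xi)^2=(\ln\delta)^2/\xi+\ln\delta+\tfrac14\xi$ and bounding each term separately — namely $(\ln\delta)^2/\xi\le(\ln\delta)^2/\sigma_B^2$, $\ln\delta\le|\ln\delta|$, and $\tfrac14\xi\le\tfrac14(\sigma_B^2+\sum_{i\in I}c_i)$ — gives $\Phi'(d_1(\xi))\ge k/\sqrt{2\pi}$ with $k$ exactly as stated. Combining this with the smallest values of $\tfrac14\xi^{-3/2}$ and $|\xi-2\ln\delta|$ in each case yields $h$, and I expect no further obstacle beyond keeping the two cases and the sign of $\xi-2\ln\delta$ consistent throughout.
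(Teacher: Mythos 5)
Your proposal is correct and follows essentially the same route as the paper's own proof: the mean value theorem applied to $z\mapsto\Phi(d_1(z))$, the computation $\frac{d}{dz}\Phi(d_1(z))=\frac{1}{4\sqrt{2\pi}}e^{-\frac12 d_1^2(z)}z^{-3/2}(z-2\ln\delta)$, the sign analysis of $z-2\ln\delta$ under the two moneyness hypotheses, and the term-by-term bound $d_1^2(\xi)\le \ln(\delta)^2/\sigma_B^2+|\ln\delta|+\frac14\bigl(\sigma_B^2+\sum_{i\in I}c_i\bigr)$ yielding the constant $k$. Your identification of the sign condition as the reason for the restriction on $2\ln\delta$, and the endpoint bookkeeping for $|\xi-2\ln\delta|$ swapping between cases, match the paper's argument exactly.
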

\begin{proof}
Using the mean value theorem it follows that
\begin{align*}
|\Delta_I(t;\tau,T)-\Delta_J(t;\tau,T)| = \left|\frac{d\Phi(d_1(z))}{dz}\Big|_{z=\xi}\right| |\sigma_I^2(T)-\sigma_j^2(T)|
\end{align*} for some $\xi \in (\sigma_J^2(T),\sigma_I^2(T))$ and where $d_1(z)$ is defined by \eqref{eq:d1}. We have $\sigma_B^2 \leq \sigma_J^2(T)\leq \sigma_I^2(T) \leq \sigma_B^2 +\sum_{i\in I}c_i$ and therefore
\begin{align}\label{eq:volaxi}
\sigma_B^2 < \xi < \sigma_B^2 +\sum_{i\in I}c_i\;.
\end{align}
It is $$\frac{d\Phi(d_1(z))}{dz}=\Phi'(d_1(z)) d_1'(z)=\frac{1}{4\sqrt{2\pi}}\exp\left\{-\frac12 d_1^2(z)\right\}z^{-\frac32}[z-2\ln(\delta)]\;.$$
Now with the binomic formula, the triangle inequality and \eqref{eq:volaxi} it follows that 
\begin{align*}d_1^2(\xi)=\left|\ln(\delta) \frac{1}{\sqrt{\xi}} +\frac12 \sqrt{\xi}\right|^2\leq \ln(\delta)^2 \frac{1}{\sigma_B^2} +|\ln(\delta)|+ \frac14 (\sigma_B^2+\sum_{i\in I}c_i)\;.
\end{align*}
For $\ln(\delta)<0$ as well as for $0\leq 2\ln(\delta)\leq\sigma^2_B$ the term $(z-2\ln(\delta))$ is positive for all $z\in[\sigma_B^2, \sigma_B^2 + \sum_{i\in I}c_i]$. For $2\ln(\delta)>\sigma_B^2+\sum_{i\in I}c_i$ the term $(z-2\ln(\delta))$ is negative and increasing  in $z\in[\sigma_B^2, \sigma_B^2 + \sum_{i\in I}c_i]$ and so $|z-2\ln(\delta)|$ is positive and decreasing in $z$. The statement follows using \eqref{eq:volaxi} and collecting the terms. 
\end{proof}
We illustrate these results by continuing the example at the end of the previous chapter. The corresponding hedging error, the upper and lower bound for the error are shown in Figure \ref{fig:hedg}. Figure \ref{fig:hedg3} then shows the relative hedging error.
  \begin{figure}[ht]
       \centering
       \begin{minipage}{.5\textwidth}
       \centering
      \includegraphics[width=.9\linewidth]{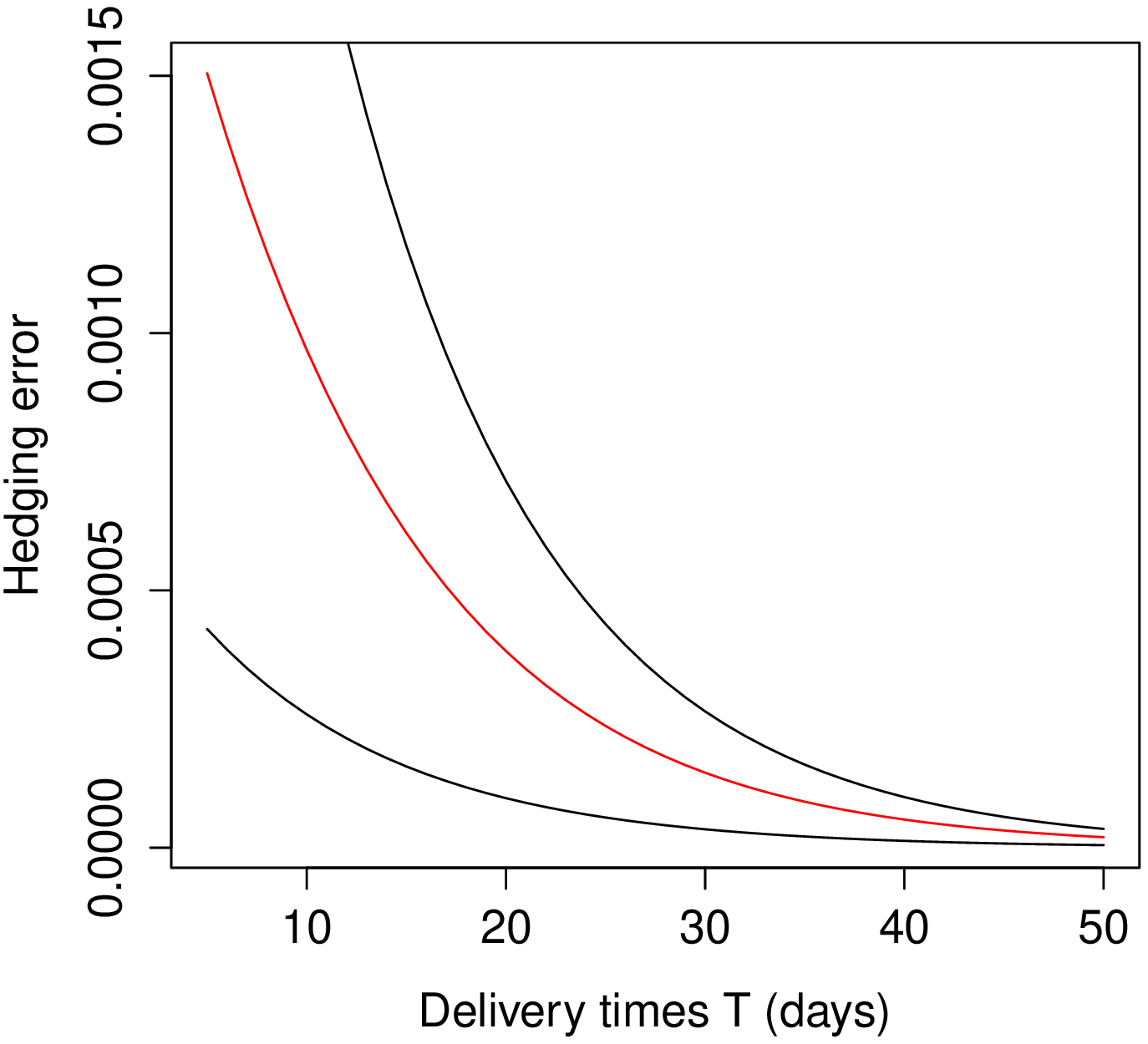}
      \caption{Hedging error and bounds for $\ln(\delta)=0$. }
       \label{fig:hedg}
       \end{minipage}%
       \begin{minipage}{.5\textwidth}
           \centering
           \includegraphics[width=.9\linewidth]{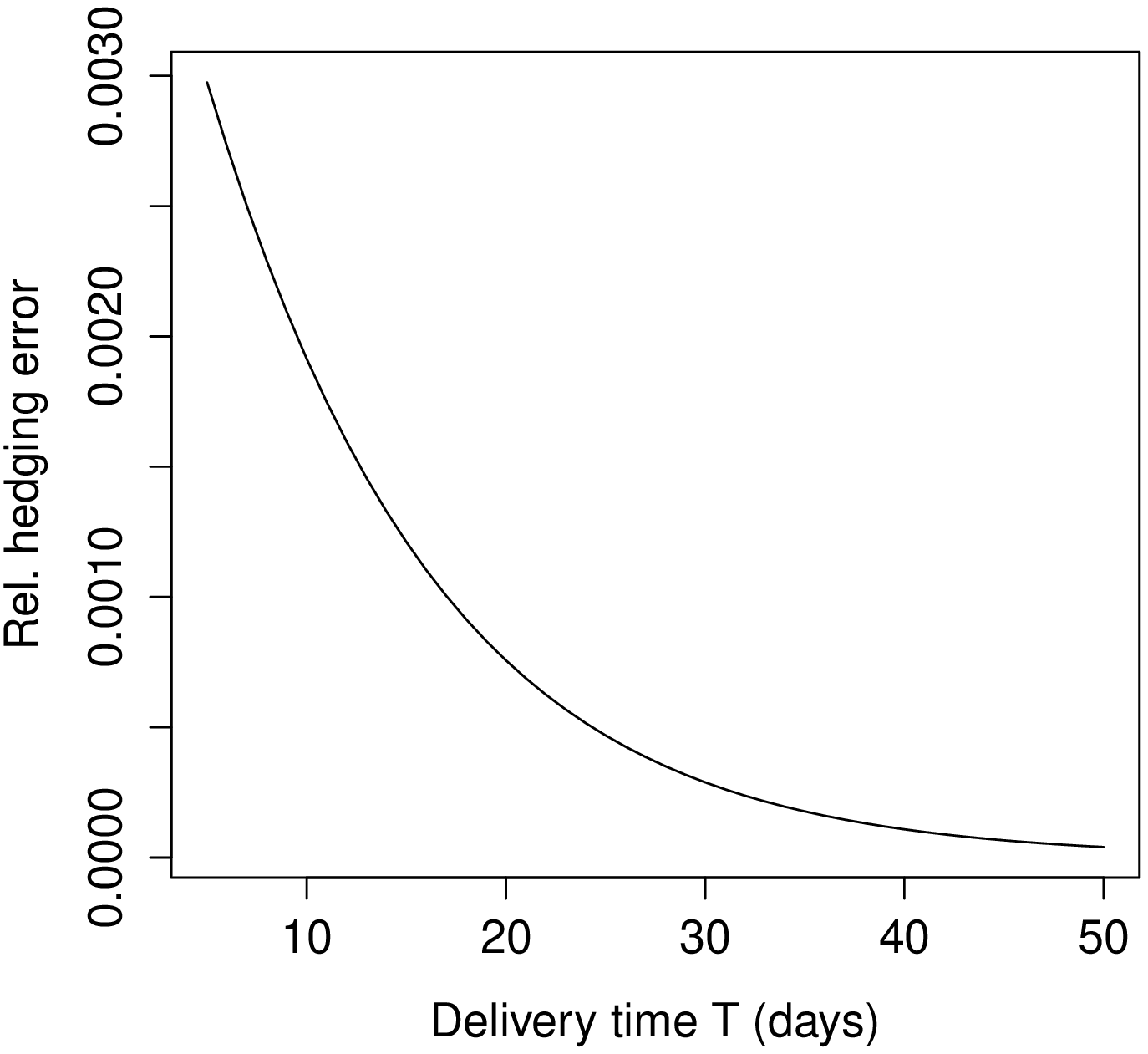}
           \caption{Relative hedging error  for $\ln(\delta)=0$. }
            \label{fig:hedg3}
        \end{minipage} 
 \end{figure} 
   
\section{Conclusion}
A special feature of electricity spot prices is its  multi-scale mean reverting behaviour. While the fast mean reverting spikes receive a remarkable amount of attention in the literature, the existing  slow mean reversion in the electricity spot has been treated less so far. In this paper, we focus on the role of slow mean reversion when pricing and hedging options on forwards. In particular, we find upper and lower bounds for the error one makes neglecting mean reverting factors. The error bounds are of exponential shape, where the speed of the exponential decay is determined through the speed of the neglected mean reverting factors and  the parameters of the long term factor. Consequently, for a slow speed of mean reversion it takes some time until the minimum error becomes small, while for a fast mean reversion rate, the error is insignificant very fast. In fact, we find that the error in the option price behaves asymptotically as   the  volatility corresponding to the neglected components times the exponential growth of the initial curve. We find similar results for the hedging component, where the error bounds decrease in terms of the neglected volatility only.  \\
The combination of  an exponential framework for the spot and an explicit modelling of the delivery period of the forward does not lead to closed form solutions for the forward prices. In this paper, we decide to keep the exponential framework as it leads to Black's formula for pricing options on forwards - which is indeed used in industry - and we choose to model the delivery period by its midpoint. Future work includes a similar analysis as carried out in this paper, but where the delivery period is modelled explicitly in an arithmetic framework.

\end{document}